\newtheorem{theorem}{Theorem}[section]
\newtheorem{lemma}[theorem]{Lemma}
\newcommand{\R}{{\mathbb R}}
\newcommand{\Z}{{\mathbb Z}}
\newcommand{\T}{{\mathbb T}}
\newcommand{\ham}[1]{\mathcal{X}_{#1}}
\newcommand{\op}[1]{\!\!\mathop{\rm ~#1}\nolimits}
\newcommand{\DD}{\mathrm{d}}
\renewcommand{\leq}{\leqslant}
\newcommand{\abs}[1]{\left|#1\right|}
\newcommand{\RM}{\mathbb{R}}
\newcommand{\h}{\hbar}
\newenvironment{remark}{\refstepcounter{theorem}\par\medskip\noindent{\bf
Remark~\thetheorem.}}{\unskip\nobreak\hfill\hbox{}}
\newenvironment{definition}{\refstepcounter{theorem}\par\medskip\noindent{\bf
Definition~\thetheorem.}}{\unskip\nobreak\hfill\hbox{}}
\begin{document}

\title[Spectral theory for singularities of
focus\--focus type]{Semiclassical inverse spectral theory for singularities
  of focus\--focus type} \author{\'Alvaro Pelayo
   and San V\~u Ng\d oc} \date{}

\maketitle
\thispagestyle{empty}

\begin{abstract}
  We prove, assuming that the Bohr\--Sommerfeld rules hold, that the
  joint spectrum near a focus\--focus critical
  value of a quantum integrable system determines the classical
  Lagrangian foliation around the full focus\--focus leaf.  The result
  applies, for instance, to $\hbar$\--pseudodifferential operators,
  and to  Berezin\--Toeplitz operators on 
  prequantizable compact symplectic
manifolds. 
\end{abstract}

\section{Introduction}

The development of semiclassical and microlocal analysis  since the
1960's now provides a strong theoretical background to discover new
interactions between spectral theoretic and analytic methods, and
geometric and dynamical ideas from symplectic geometry (see for 
instance Guillemin\--Sternberg~\cite{GuSt2012} and Zworski~\cite{Zw2012}).
The theme of this paper is to study
these interactions for an important class of singularities which appear
in integrable systems: \emph{focus\--focus singularities} (of the associated
Lagrangian foliation). The singular fibers corresponding to such
singularities are pinched tori (Figure~\ref{fig:pinched}).  Focus\--focus singularities appear
naturally also in algebraic geometry in the context of Lefschetz fibrations, 
where they are sometimes called \emph{nodes}.

In this article we consider the joint spectrum of a pair of commuting
semiclassical operators, in the case where the phase space is
four\-dimensional.  We prove, assuming that the Bohr\--Sommerfeld
rules hold, that the joint spectrum in a neighborhood of a
focus\--focus singularity determines the classical dynamics of the
associated system around the focus\--focus fiber. This problem
belongs to a class of inverse spectral questions which has attracted
much attention in recent years, eg. \cite{HeZe2010, CoGu2011}, and which goes back to
pioneer works of Colin de Verdi{\`e}re \cite{CdV, CdV2} and
Guillemin\--Sternberg \cite{GuSt}, in the 1970s and 1980s.

The result applies as soon as one can show that the usual
Bohr\--Sommerfeld rules hold for the quantum system. This includes the
cases of $\hbar$-semiclassical pseudodifferential operators, as shown
in~\cite{charbonnel} and~\cite{vungoc-focus}; it also includes the
interesting case of Berezin-Toeplitz quantization,
see~\cite{charles-quasimodes}.

Examples of quantum integrable systems, given by differential
operators, with precisely one singularity of focus\--focus type are
the spherical pendulum, discussed by Cushman and Duistermaat in
\cite{cushman-duist} (see also~\cite[Chapitre 2]{vungoc-panorama}),
and the ``Champagne bottle''~\cite{child}. Integrable systems given by
Berezin-Toeplitz quantization are also common in the physics
literature.  An important example is the coupling of angular momenta
(spins)~\cite{sadovski-zhilinski} (see also~\cite[Section
8.3]{pelayo-polterovich-vungoc} for a proof that it is indeed a
Berezin-Toeplitz system).

Many other integrable systems have focus\--focus singularities, which
are in fact the simplest integrable singularity with an isolated
critical value. An infinite number of non-isomorphic systems with
focus-focus singularities are provided by the so called
\emph{semitoric systems} constructed in
\cite{pelayo-vungoc-constructing}.

The structure of the paper is as follows: in Section~\ref{ba} we
explain what we mean by a semiclassical operator, and recall the
notion of integrable system in dimension four. In Section~\ref{mt} we
state our main theorem: Theorem~\ref{main}. In
Section~\ref{taylor:sec} we review the construction of the so called
Taylor series invariant, which classifies, up to isomorphisms, a
semiglobal neighborhood of a focus\--focus singularity. In Section
\ref{sec:main} we prove Theorem \ref{main}.

\section{Symplectic theory of integrable systems} \label{ba}

Let $(M,\omega)$ be a smooth, connected $4$\--dimensional symplectic
manifold.

\subsection{Integrable systems}
An \emph{integrable system} $(J,H)$ on $(M,\omega)$ consists of two
Poisson commuting functions $J,H \in \op{C}^{\infty}(M;\mathbb{R})$
i.e.~:
 $$
 \{J,H\}:=\omega(\ham{J},\, \ham{H})=0,
 $$
 whose differentials are almost everywhere linearly independent
 $1$\--forms.  Here $\ham{J},\ham{H}$ are the Hamiltonian
 vector fields induced by $J,H$, respectively, via the symplectic form
 $\omega$: $ \omega(\ham{J},\cdot)=-{\rm d}J$,
 $\omega(\ham{H},\cdot)=-{\rm d}H$.
  
 For instance, let $M_0={\rm T}^*\T^2$ be the cotangent bundle of the
 torus $\T^2$, equipped with canonical coordinates
 $(x_1,x_2,\xi_1,\xi_2)$, where $x\in \T^2$ and $\xi\in {\rm
   T}^*_x\T^2$. The linear system $$(J_0,H_0):=(\xi_1, \xi_2)$$ is
 integrable.

 An \emph{isomorphism} of integrable systems $(J,\,H)$ on
 $(M,\omega)$ and $(J',\,H')$ on $(M',\omega')$ is a
 diffeomorphism $ \varphi \colon M \to M'$ such that
 $\varphi^*\omega'=\omega$ and
$$
\varphi^*(J',\,H')=(f_1(J,\,H),\,f_2(J,\,H))
$$
for some local diffeomorphism $(f_1,f_2)$ of $\R^2$. This same
definition of isomorphism extends to any open subsets $U\subset
M$, $U'\subset M'$ (and this is the form in which we will use it
later). Such an isomorphism will be called \emph{semiglobal} if $U,U'$
are respectively saturated by level sets $\{J=\text{const}_1,
H=\text{const}_2\}$ and $\{J'=\text{const}'_1,
H'=\text{const}'_2\}$.

If $F=(J,H)$ is an integrable system on $(M,\omega)$, consider a point
$c\in\R^2$ that is a \emph{regular value} of $F$, and such that the
fiber $F^{{-1}}(c)$ is compact and connected. Then, by the
action-angle theorem~\cite{duistermaat}, a saturated neighborhood of
the fiber is \emph{isomorphic} in the previous sense to the above
linear model on $M_0={\rm T}^*\T^2$. Therefore, all such regular
fibers (called \emph{Liouville tori}) are isomorphic in a neighborhood.

However, the situation changes drastically when the condition that $c$
be regular is violated. For instance, it has been proved
in~\cite{vungoc-semi-global} that, when $c$ is a so-called
\emph{focus-focus} critical value (see Section~\ref{sec:ff1} below),
an infinite number of equations has to be satisfied in order for two
systems to be semiglobally isomorphic near the critical fiber (see
Section~\ref{taylor:sec}).

\subsection{Focus\--focus singularities} \label{sec:ff1}

Let $\mathcal{F}$ be the \emph{associated singular foliation} to the
integrable system $F=(J,\, H) \colon M \to \R^2$, the leaves of which
are by definition the connected components of the fibers $F^{-1}(c)$.
Let $p$ be a critical point of $F$.  We assume for simplicity that
$F(p)=0$, and that the (compact, connected) fiber
$\Lambda_0:=F^{-1}(0)$ does not contain other critical points.  A
focus\--focus singularity $p$ is characterized by Eliasson's theorem
\cite{eliasson-these,vungoc-wacheux} as follows: there exist
symplectic coordinates $(x,\, y,\, \xi,\,\eta)$ in a neighborhood $W$
around $p$ in which $(q_1,\,q_2)$, given by
\begin{equation}
  q_1=x\eta-y\xi, \,\,   q_2=x\xi+y\eta
  \label{equ:cartan}
\end{equation}
is a momentum map for the foliation $\mathcal{F}$: one has
$F=g(q_1,q_2)$ for some local diffeomorphism $g$ of $\R^2$ defined
near the origin (the critical point $p$ corresponds to coordinates
$(0,\,0,\,0,\,0)$).  One of the major characteristics of focus\--focus
singularities is the existence of a \emph{Hamiltonian action of $S^1$}
that commutes with the flow of the system, in a neighborhood of the
singular fiber that contains $p$~\cite{zung-focus,zung-I}. Such
singularities are also very natural candidates for a topological study
of singular Lagrangian fibrations~\cite{symington-four}.

\section{Main Theorem: inverse spectral theory for focus\--focus
  singularities} \label{mt}

Let $(M,\omega)$ be a $4$\--dimensional connected symplectic manifold.

\subsection{Semiclassical operators}
Let $I \subset (0,1]$ be any set which accumulates at $0$.  If
$\mathcal{H}$ is a complex Hilbert space, we denote by
$\mathcal{L}(\mathcal{H})$ the set of linear (possibly unbounded)
selfadjoint operators on $\mathcal{H}$ with a dense domain. 

A space
$\Psi$ of \emph{semiclassical operators} is a subspace of
$\prod_{\hbar \in I} \mathcal{L}(\mathcal{H}_{\hbar})$ equipped with a
$\R$\--linear map $$\sigma \colon \Psi \to {\rm C}^{\infty}(M;\, \R),$$ called
the \emph{principal symbol map}. If $P=(P_{\hbar})_{\hbar \in I} \in \Psi$, the image $\sigma(P)$ is
called the \emph{principal symbol of $P$}. 

We say that two
semiclassical operators $(P_{\hbar})_{\hbar \in I}$ and
$(Q_{\hbar})_{\hbar \in I}$ \emph{commute} if for each $\hbar \in I$
the operators $P_{\hbar}$ and $Q_{\hbar}$ commute.

\subsection{Semiclassical spectrum}

Let $P=(P_{\hbar})_{\hbar \in I}$ and $Q=(Q_{\hbar})_{\hbar \in I}$ be
semiclassical commuting operators on Hilbert spaces
$(\mathcal{H}_\h)_{\hbar \in I}$, where at each $\hbar \in I$ the
operators have a common dense domain $\mathcal{D}_{\hbar} \subset
\mathcal{H}_{\hbar}$ such that $P_{\hbar}(\mathcal{D}_{\hbar}) \subset
\mathcal{D}_{\hbar}$ and $Q_{\hbar}(\mathcal{D}_{\hbar}) \subset
\mathcal{D}_{\hbar}$. 

For fixed $\hbar$, the \emph{joint spectrum} of
$(P_{\hbar},Q_{\hbar})$ is the support of the joint spectral
measure (see Figure~\ref{fig:spectrum} for an example). It is denoted by $\op{JointSpec}(P_{\hbar},\,Q_{\hbar})$. If
$\mathcal{H}_\h$ is finite dimensional, then
$$
\op{JointSpec}(P_{\hbar},\,Q_{\hbar})=\Big\{(\lambda_1,\lambda_2)\in
\R^2\,\, |\,\, \exists v\neq 0,\,\, P_{\hbar} v = \lambda_1
v,\,\,Q_{\hbar} v = \lambda_2 v \Big\}.
 $$
 The \emph{joint spectrum} of $P,Q$ is the collection of all joint
 spectra of $(P_{\hbar},Q_{\hbar})$, $\hbar \in I$. It is denoted by
 $\op{JointSpec}(P,\,Q)$. For convenience of the notation, we will
 also view the joint spectrum of $P,Q$ as a set depending on $\hbar$.

\subsection{Bohr\--Sommerfeld rules}
Recall that the \emph{Hausdorff distance} between two subsets $A$ and
$B$ of $\R^2$ is
 $$
 {\rm d}_H(A,\,B):= \inf\{\epsilon > 0\,\, | \,\ A \subseteq
 B_\epsilon \ \mbox{and}\ B \subseteq A_\epsilon\},
$$
where for any subset $X$ of $\R^2$, the set $X_{\epsilon}$ is
$$X_\epsilon := \bigcup_{x \in X} \{m \in \R^2\, \, | \,\, \|x - m \|
\leq \epsilon\}.$$ If $(A_{\hbar})_{\hbar \in I}$ and
$(B_{\hbar})_{\hbar \in I}$ are sequences of subsets of
$\mathbb{R}^2$, we say that $$A_{\hbar} = B_{\hbar} +
\mathcal{O}(\hbar^{N}) $$ if there exists a constant $C>0$ such that 
$$
{\rm
  d}_H(A_{\hbar},\,B_{\hbar})\leq C\hbar^{N}$$ 
  for all $\hbar \in I$.

\begin{figure}[htbp]
  \begin{center}
    \includegraphics[width=6cm]{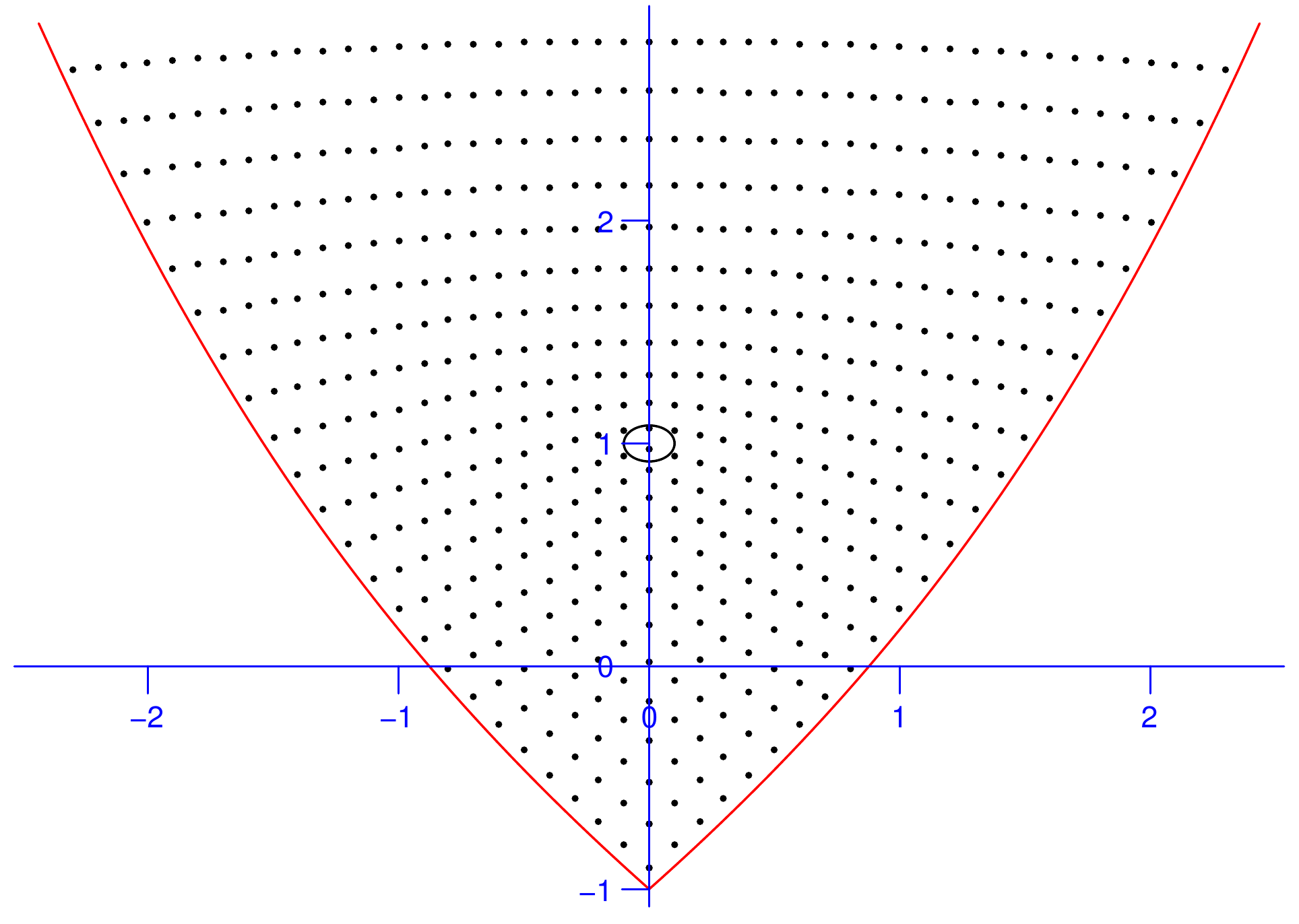}
    \caption{Joint spectrum of the quantum spherical pendulum when
      $\hbar=\frac{1}{10}$ (figure taken
      from~\cite{vungoc-panorama}).}
    \label{fig:spectrum}
  \end{center}
\end{figure}

\begin{definition} \label{chart} Let $F:=(J,H) \colon M \to
  \mathbb{R}^2$ be an integrable system on a $4$\--dimensional
  connected symplectic $4$\--manifold.  Let $P$ and $Q$ be commuting
  semiclassical operators with principal symbols $J,H \colon M \to
  \mathbb{R}$.  Let $U \subset \mathbb{R}^2$ be an open set. We say
  that $\op{JointSpec}(P,\,Q)$ \emph{satisfies the Bohr\--Sommerfeld
    rules on} $U$ if for every regular value $c \in U$ of $F$ there
  exists a small ball ${\rm B}(c,\epsilon_c)\subset U$ centered at $c$
  such that
  \begin{eqnarray} \label{bs1} \nonumber \op{JointSpec}(P,\,Q) \cap
    {\rm B}(c,\epsilon_c)= g_{\hbar}(2\pi \hbar \Z^2\cap D) \cap {\rm
      B}(c,\epsilon_c) + \mathcal{O}(\hbar^2)
  \end{eqnarray}
  with $$g_{\hbar}=g_0+\hbar g_1,$$ where $g_0,g_1$ are smooth maps
  defined on a bounded open set $D\subset\R^2$, $g_0$ is a
  diffeomorphism into its image, and the components of
  $g_0^{-1}=(\mathcal{A}_1,\,\mathcal{A}_2)$ form a basis of action
  variables.  We call $(g_{\hbar})^{-1}$ an \emph{affine chart} for
  $\op{JointSpec}(P,\,Q)$.
\end{definition}

\smallskip For instance, Bohr\--Sommerfeld rules are known to hold for
pseudodifferential operators on cotangent bundles
\cite{charbonnel,vungoc-focus}, or for Toeplitz operators on
prequantizable compact symplectic
manifolds~\cite{charles-quasimodes}. It would be interesting to
formalize the minimal semiclassical category where Bohr-Sommerfeld
rules are valid.

\begin{remark} \label{rk} If $(g_{\hbar})^{-1}$ is an affine chart for
  ${\rm JointSpec}(P,Q)$ and $B \in {\rm GL}(2,\Z)$ then $B \circ
  (g_{\hbar})^{-1}$ is again an affine chart.
\end{remark}

\subsection{Main Theorem}

Let ${\rm CIS}(M,\omega)$ be the set of \emph{classical integrable
  systems}
\[
 F=(J,H)\colon M \to \mathbb{R}^2
\]
on the connected $4$\--dimensional symplectic manifold $(M,\omega)$, such that
$F$ is a proper map.  Let ${\rm QIS}(M,\omega)$ be the set of
\emph{quantum integrable systems} given as pairs of commuting
semiclassical operators $(P,Q)$ whose principal symbols, say
$(J,H)=\sigma(P,Q)$, form an integrable system $F\in {\rm
  CIS}(M,\omega)$. Let $\mathcal{P}(\R^2)$ the set of subsets of
$\R^2$, and consider the following diagram:

\begin{eqnarray} \label{mainlemma} \xymatrix@C=5em{ %\ar @{} [dr] |{}
    {\rm QIS}(M,\omega) \ar[r]^>>>>>>>>>>{{\rm JointSpec}} \ar[d]^{
      \sigma } & \mathcal{P}(\R^2)
    \\
    {\rm CIS}(M,\omega) & } \nonumber
\end{eqnarray}

\begin{theorem} \label{main} Let $(M,\omega)$ and $(M',\omega')$ be
  connected $4$\--dimensional symplectic manifolds. Let $(P,Q)$ and
  $(P,Q')$ be quantum integrable systems on $(M,\omega)$ and
  $(M',\omega')$, respectively, which have a focus\--focus singularity
  at points $p\in M$ and $p'\in M'$ respectively.  Suppose that
  $c_0:=\sigma(P,Q)(p)=\sigma(P',Q')(p')$ and that there exists a
  neighborhood $U$ of $c_0$ such that ${\rm JointSpec}(P,Q)$ and ${\rm
    JointSpec}(P',Q') $ satisfy the Bohr\--Sommerfeld rules on $U$. If
 $$
 {\rm JointSpec}(P,Q)\cap U=({\rm JointSpec}(P',Q')\cap U) + \mathcal{O}(\hbar^2),
 $$
 then there are saturated neighborhoods $\mathcal{V},\mathcal{V}'$ of
 the singular fibers of $\sigma(P,Q)$ and $\sigma(P',Q')$
 respectively, such that the restrictions $\sigma(P,Q)|_{\mathcal{V}}$
 and $\sigma(P',Q')|_{\mathcal{V'}}$ are isomorphic as integrable
 systems.
\end{theorem}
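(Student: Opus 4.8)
The plan is to reduce the inverse spectral statement to the classical semiglobal classification theorem of \cite{vungoc-semi-global}, which says that the symplectic germ of an integrable system around a focus\--focus fiber is completely determined by a \emph{Taylor series invariant} (reviewed in Section~\ref{taylor:sec}). So the goal becomes: recover this Taylor series invariant from the joint spectrum near $c_0$. First I would use the Bohr\--Sommerfeld hypothesis on $U$: at each regular value $c$ near $c_0$ we are given an affine chart $g_\hbar^{-1} = (g_0 + \hbar g_1)^{-1}$, whose leading term $g_0^{-1} = (\mathcal{A}_1, \mathcal{A}_2)$ is a basis of action variables. The key classical fact is that the action variables, hence the affine structure on the set of regular values, develop a nontrivial monodromy and an affine \emph{singularity} at $c_0$; the variation of the actions as one encircles $c_0$ is, up to the known universal logarithmic term, exactly the Taylor series invariant. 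Thus the strategy is: (1) from $\op{JointSpec}(P,Q)$ on a punctured neighborhood of $c_0$, extract the integral affine structure with its monodromy; (2) recognize the Taylor series invariant inside the asymptotics of the actions near $c_0$; (3) conclude that equal joint spectra (mod $\mathcal{O}(\hbar^2)$) force equal invariants, hence symplectomorphic germs.

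The main technical point is showing that the joint spectrum \emph{determines} the affine structure canonically, i.e.\ independently of the choice of affine chart. Here I would invoke Remark~\ref{rk}: two affine charts for the same joint spectrum differ by an element of $\mathrm{GL}(2,\Z)$ (and a translation), precisely the ambiguity inherent in a basis of action variables. So the $\mathbb Z^2$\--lattice structure $g_0(2\pi\hbar\Z^2 \cap D)$ reconstructed from the $\mathcal O(\hbar^2)$\--spaced points of the spectrum is well\--defined up to this group, and the monodromy (a conjugacy class in $\mathrm{GL}(2,\Z)$) together with the singular affine germ at $c_0$ is a genuine invariant of the spectrum. One must check the points of the joint spectrum are dense enough, with spacing $2\pi\hbar$, to recover $g_0$ to the needed order as $\hbar\to 0$ along $I$ — this is where properness of $F$ and the structure of $I$ accumulating at $0$ are used. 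Then I would let $\hbar \to 0$: the hypothesis ${\rm JointSpec}(P,Q)\cap U = {\rm JointSpec}(P',Q')\cap U + \mathcal O(\hbar^2)$ implies the two reconstructed affine structures (and their singular germs at $c_0$) coincide.

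Concretely, the heart of the matter is step (2): extracting the Taylor series invariant from the action variables near the focus\--focus value. I would appeal to the local normal form \eqref{equ:cartan}: in Eliasson coordinates the system is $F = g(q_1,q_2)$ with $q_1 = x\eta - y\xi$, $q_2 = x\xi + y\eta$, and the $S^1$\--action generated by $q_1$ is globally defined near the singular fiber \cite{zung-focus}. One action is therefore $\mathcal A_1 = q_1$ (globally, up to normalization), and the second action $\mathcal A_2$ is multivalued with the prescribed logarithmic branching; its "regular part", expanded as a Taylor series in the two variables near $c_0$, is by definition the invariant of \cite{vungoc-semi-global}. Since $\mathcal A_1, \mathcal A_2$ are (up to $\mathrm{GL}(2,\Z)$ and translation, hence canonically after Remark~\ref{rk}) read off from $g_0$, and $g_0$ is read off from the joint spectrum as above, the invariant is a function of the spectrum. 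Equality of spectra then yields equality of invariants, and \cite{vungoc-semi-global} supplies the desired isomorphism of the germs $\sigma(P,Q)|_{\mathcal V}$ and $\sigma(P',Q')|_{\mathcal V'}$.

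The step I expect to be the main obstacle is making precise and rigorous the passage from the \emph{discrete} joint spectrum (a lattice of points with $\mathcal O(\hbar^2)$ error, only for $\hbar$ in a set $I$ accumulating at $0$) to the \emph{continuous} singular affine structure on the punctured neighborhood of $c_0$, uniformly up to the singular fiber. In particular one must control the affine charts as $c \to c_0$ and patch the local charts ${\rm B}(c,\epsilon_c)$ into a coherent global\--on\--the\--punctured\--disk object whose monodromy and singular germ are then genuinely intrinsic; the $\hbar^2$ precision is exactly what is needed so that the quantum corrections $\hbar g_1$ do not interfere with recovering $g_0$, but verifying this interplay near the singularity — where $g_0$ itself degenerates — is delicate and is, I expect, the technical core of Section~\ref{sec:main}.
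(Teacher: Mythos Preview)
Your overall strategy is the paper's: reduce to the Taylor series invariant of \cite{vungoc-semi-global}, and recover that invariant from the action variables, which in turn are read off from the joint spectrum via Bohr--Sommerfeld. You also correctly isolate the role of the $S^1$-action, the multivaluedness of the second action, and the logarithmic branching. But there is a genuine gap, and you have misidentified where the real work lies.

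The gap is the phrase ``hence canonically after Remark~\ref{rk}.'' Remark~\ref{rk} says only that affine charts are well-defined up to $\mathrm{GL}(2,\Z)$; it does \emph{not} make anything canonical. What the spectrum gives you (via Lemma~\ref{pro} and Lemma~\ref{52} in the paper) is that the action maps of the two systems satisfy $\mathcal A = B\,\mathcal A' + \text{const}$ on $\dot U$ for some $B\in\mathrm{GL}(2,\Z)$. From this alone you cannot conclude $(S)^\infty=(S')^\infty$: the Taylor series is built from the \emph{specific} actions $(\mathcal I_1,\mathcal I_2)$ attached to the cycles $(\gamma_1,\gamma_2)$ dictated by the Eliasson normal form, and a nontrivial $B$ acting on those cycles would a priori change $S$. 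The technical core of the paper's proof (Steps~3--5) is exactly to pin the matrix $C(C')^{-1}$ relating $(\mathcal I_1,\mathcal I_2)$ to $(\mathcal I_1',\mathcal I_2')$ down to the identity: the monodromy of $\mathcal I_2$ forces the upper-right entry to vanish; the common Eliasson normalization gives $\DD\mathcal I_1=\DD\mathcal I_1'=2\pi\,\DD c_1$, forcing the upper-left entry to be $1$; the convention $\tau_1\in[0,2\pi)$ kills the lower-left entry; and comparing the $\mathrm{Re}(\log c)$ singularities in $\tau_2,\tau_2'$ forces the lower-right entry to be $1$. Only then does $\DD S=\DD S'$ follow. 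Your sketch skips this entirely.

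Relatedly, the obstacle you flag --- passing from the discrete spectrum to the continuous affine structure uniformly up to $c_0$ --- is not where the paper spends its effort. That passage is handled by the short Lemma~\ref{pro} at \emph{regular} values only; one never needs uniform control of the affine charts as $c\to c_0$. All the delicate behaviour near the singular fiber is handled on the purely classical side, via the known formula~\eqref{star} for $\mathcal I_2$ in terms of $S$ and the universal logarithmic term.
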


\section{Taylor series invariant at focus\--focus singularity}
\label{taylor:sec}

We use here the notation of Section \ref{sec:ff1}. In particular $p\in
M$ is a focus-focus point, and $W$ is a small neighborhood of $p$. Fix
$A'\in \Lambda_0\cap (W\setminus\{p\})$ and let $\Sigma$ denote a
small 2\--dimensional surface transversal to $\mathcal{F}$ at the
point $A'$.  Since the Liouville foliation in a small neighborhood of
$\Sigma$ is regular for both $F$ and $q=(q_1,\,q_2)$, there is a
diffeomorphism $\varphi$ from a neighborhood $U$ of $F(A')\in\R^2$
into a neighborhood of the origin in $\R^2$ such that $q=\varphi \circ
{F}$. Thus there exists a smooth momentum map $\Phi=\varphi \circ{F}$
for the foliation, defined on a neighborhood $\Omega=F^{-1}(U)$ of
$\Lambda_0$, which agrees with $q$ on $W$.  Write $\Phi:=(H_1,\,H_2)$
and $\Lambda_c:=\Phi^{-1}(c)$.  Note that $ \Lambda_0=\mathcal{F}_p.
$ It follows from~(\ref{equ:cartan}) that near $p$ the $H_1$\--orbits
must be periodic of primitive period $2\pi$, whereas the vector field
$\ham{H_2}$ is hyperbolic with a local stable manifold (the
$(\xi,\eta)$-plane) transversal to its local unstable manifold (the
$(x,y)$-plane). Moreover, $\ham{H_2}$ is \emph{radial}, meaning that
the flows tending towards the origin do not spiral on the local
(un)stable manifolds.
  
Suppose that $A \in\Lambda_c$ for some regular value $c$.
  
\begin{definition} \label{taus} Let $\tau_2(c)>0$ be the smallest time
  it takes the Hamiltonian flow associated with $H_2$ leaving from $A$
  to meet the Hamiltonian flow associated with $H_1$ which passes
  through $A$.  Let $\tau_1(c)\in[0,2\pi)$ be the time that it takes to
  go from this intersection point back to $A$, closing the trajectory.
\end{definition}
  
In Definition \ref{taus}, the existence of $\tau_2$ is ensured by the
fact that the flow of $H_2$ is a quasiperiodic motion always
transversal to the $S^1$-orbits generated by $H_1$.

The commutativity of the flows ensure that $\tau_1(c)$ and $\tau_2(c)$
do not depend on the initial point $A$.
\begin{figure}[htbp]
  \begin{center}
    \includegraphics[width=6cm]{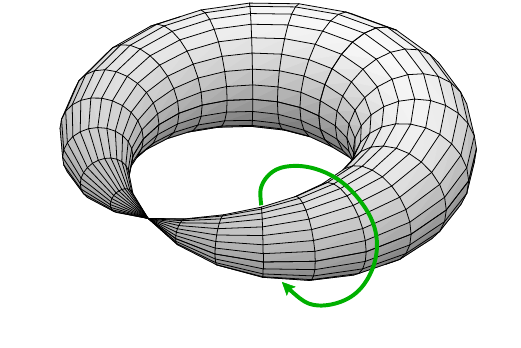}
    \caption{Fiber containing a focus\--focus singularity.}
    \label{fig:pinched}
  \end{center}
\end{figure}
Write $c=(c_1,\,c_2)=c_1+\op{i}c_2$ ($c_1,\, c_2 \in \R$), and let
$\op{log} c$ be a fixed determination of the logarithmic function on
the complex plane. Let
\begin{equation}
  \left\{
    \begin{array}{ccl}
      \sigma_1(c) & = & \tau_1(c)-{\rm Im}(\op{log} c) \\
      \sigma_2(c) & = & \tau_2(c)+{\rm Re}(\op{log} c),
    \end{array}
  \right.
  \label{equ:sigma} 
\end{equation}
where ${\rm Re}$ and ${\rm Im}$ respectively stand for the real and imaginary
parts of a complex number.  V\~u Ng\d oc proved in
\cite[Proposition~\,3.1]{vungoc-semi-global} that $\sigma_1$ and
$\sigma_2$ extend to smooth and single\--valued functions in a
neighborhood of $0$ and that the differential 1\--form
$\sigma:=\sigma_1\, \DD{}c_1+\sigma_2\, \DD{}c_2 $ is closed.  Notice
that if follows from the smoothness of $\sigma_1$ that one may choose
the lift of $\tau_1$ to $\R$ such that
$\sigma_1(0)\in[0,\,2\pi)$. This is the convention used throughout.

Following \cite[Definition~3.1]{vungoc-semi-global} , let $S$ be the
unique smooth function defined around $0\in\R^2$ such that
\begin{eqnarray}
  \DD{}S=\sigma,\,\, \,\, S(0)=0.
\end{eqnarray}
The Taylor expansion of $S$ at $(0,\,0)$ is denoted by $(S)^\infty$.

\begin{definition}
  The expansion $(S)^{\infty}$ is a formal power series in two
  variables with vanishing constant term, and we call it the \emph{Taylor series invariant of $(J,\,H)$
    at the focus\--focus point $c_0$}.
\end{definition}

\begin{theorem}[\cite{vungoc-semi-global}] \label{theo:vn} The Taylor
  series invariant $(S)^{\infty}$ charaterizes, up to symplectic
  isomorphisms, a semiglobal saturated neighborhood of the singular
  fiber of the focus\--focus singularity $p$.
\end{theorem}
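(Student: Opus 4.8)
The plan is to prove the two halves of the classification separately: (i) \emph{invariance} — that $(S)^{\infty}$ depends only on the symplectic germ of the foliation near the singular fiber $\Lambda_0$ — and (ii) \emph{completeness} — that $(S)^{\infty}$ conversely determines this germ. For (i), the first task is to check that all the choices entering the construction produce the same Taylor series: the transversal $\Sigma$ and the point $A'$, the diffeomorphism $\varphi$ bringing $F$ to the Eliasson model $q=(q_1,q_2)$ on $W$, and the determination of $\log$. The periods $\tau_1(c),\tau_2(c)$ are intrinsic once $H_1,H_2$ are fixed — the excerpt already records that they do not depend on the base point $A\in\Lambda_c$, by commutativity of the two flows — so the only genuine ambiguity is in the choice of $H_1,H_2$, equivalently of $\varphi$. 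Two admissible choices differ by a local diffeomorphism of $\R^2$ conjugating $q$ to $q$ near $0$ and preserving the focus--focus data (primitive period $2\pi$ for the $H_1$--flow, hyperbolicity and radiality of $\ham{H_2}$); by the uniqueness part of Eliasson's theorem and by the canonicity (up to sign) of the $S^{1}$--action near the singular fiber this comparison map is rigidly constrained, and one checks it leaves $\sigma=\sigma_1\,\DD{}c_1+\sigma_2\,\DD{}c_2$, hence $S$, hence $(S)^{\infty}$, unchanged once the normalisation $\sigma_1(0)\in[0,2\pi)$ fixed in the text is imposed. The same computation, applied to a symplectic isomorphism $\psi\colon\mathcal V\to\mathcal V'$ of the two foliations, shows that after normalising both momentum maps to their Eliasson forms the induced map on the base reduces to this same symmetry, so $(S)^{\infty}$ is a genuine symplectic invariant. (The analytic input that $\sigma_1,\sigma_2$ extend smoothly across $0$ and that $\sigma$ is closed is Proposition~3.1 of \cite{vungoc-semi-global}, quoted above: in the Eliasson coordinates, writing $z=x+\ii y$ and $\zeta=\xi+\ii\eta$, the joint flow of $(q_1,q_2)$ is a complex rotation--dilation, so the return time carries precisely the universal singularity $-\mathrm{Im}\,\log c$, resp.\ $\mathrm{Re}\,\log c$, and subtracting it leaves a smooth remainder.)

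For (ii) I would build an explicit model. Start from the linear focus--focus system $(q_1,q_2)$ on a ball around $0$ in $\R^{4}$ together with its Hamiltonian $S^{1}$--action (the $q_1$--flow). A saturated neighborhood of $\Lambda_0\setminus\{p\}$ is a neighborhood of a cylinder carrying standard action--angle coordinates; gluing the two ends of this cylinder through the linear model produces a saturated neighborhood $\mathcal V_{S}$ of a pinched torus in which the only freedom in the gluing that is not removable by a foliation--preserving symplectomorphism is a closed $1$--form on the base, whose primitive (normalised to vanish at $0$) is exactly $S$. The first claim is that \emph{every} focus--focus germ is isomorphic to some $\mathcal V_{S}$: near $p$ apply Eliasson's theorem, near the regular part apply the action--angle theorem, and patch on the overlap; the patching data, straightened using the $S^{1}$--action, reduces to the single function $S$ up to the residual normalisation $\sigma_1(0)\in[0,2\pi)$. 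The second claim is that $\mathcal V_{S}\cong\mathcal V_{S'}$ whenever $(S)^{\infty}=(S')^{\infty}$: put $R=S'-S$, which is flat at $0$, and run a Moser path argument along $S_t=S+tR$. Because $R$ and all its derivatives vanish at $0$, the time--dependent foliation--preserving vector field $X_t$ solving the homological equation for $\tfrac{\DD{}}{\DD{}t}\mathcal V_{S_t}$ extends smoothly across the cylinder \emph{and} across the singular point $p$; its time--$1$ flow is the desired symplectic isomorphism. Combining the two claims gives $\mathcal V\cong\mathcal V_{S}\cong\mathcal V_{S'}\cong\mathcal V'$ as soon as the Taylor series agree.

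The main obstacle is everything that happens \emph{at the focus--focus point $p$ and along the pinched fiber $\Lambda_0$}: away from $\Lambda_0$ the statement is just the Liouville--Arnold classification of regular Lagrangian fibrations, and it is the singular fiber that forces the logarithmic terms and makes the relevant perturbations only barely controllable. Concretely, the two delicate points are (a) proving that $\sigma_1,\sigma_2$ — assembled from divergent return times — are smooth up to $c=0$, which rests on the exactness of the Eliasson normal form and on the radiality of $\ham{H_2}$ (no spiralling on the local (un)stable manifolds); and (b) in the completeness step, showing that a flat perturbation of $S$ is realised by a symplectomorphism that does not blow up at $p$, where the presence of the commuting Hamiltonian $S^{1}$--action near the singular fiber is exactly what lets one average and straighten while keeping all constructions smooth through the pinch point. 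Once (a) and (b) are in hand, the remaining patching arguments are routine.
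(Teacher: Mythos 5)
This theorem is not proved in the present paper at all: it is quoted as an external result from \cite{vungoc-semi-global}, so there is no internal proof to compare against. Your sketch reproduces, in outline, essentially the strategy of that original proof — well-definedness of $(S)^{\infty}$ via the uniqueness part of Eliasson's normal form and the smooth extension of $\sigma_1,\sigma_2$ across $c=0$, then classification by reducing any germ to a model built from the linear focus--focus chart glued to the regular (action--angle) part, with the residual gluing datum a closed $1$-form on the base, and finally absorption of the flat difference $S'-S$ by a fiber-translating symplectomorphism whose smoothness through the pinched fiber rests precisely on flatness compensating the logarithmic divergence of the return times — so it is the same approach, correctly identifying the genuinely delicate points (a) and (b).
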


It is interesting to notice that, in the famous case of the spherical
pendulum, the Taylor series invariant was recently explicitly
computed~\cite{dullin-pendulum}.

\section{Proof of Theorem \ref{main}} \label{sec:main}

In view of Theorem~\ref{theo:vn}, we wish to prove that the symplectic
invariant $(S)^\infty$ is determined by the joint spectrum. The proof
is organized in several statements. Throughout we use the notation of
Section \ref{taylor:sec}. Let $(P,Q)\in {\rm QIS}(M,\omega)$ and
$(P,Q')\in{\rm QIS}(M',\omega')$ be quantum integrable systems which
have a focus\--focus singularity at points $p\in M$ and $p'\in M'$,
respectively.  Suppose that $c_0:=\sigma(P,Q)(p)=\sigma(P',Q')(p')$
and that there exists a neighborhood $U$ of $c_0$ such that ${\rm
  JointSpec}(P,Q)$ and ${\rm JointSpec}(P',Q') $ satisfy the
Bohr\--Sommerfeld rules on $U$.  Assume that
 $$
 {\rm JointSpec}(P,Q)\cap U=({\rm JointSpec}(P',Q')\cap U) +
 \mathcal{O}(\hbar^2)
 $$
 Let $\sigma(P,Q)=(J,H):=F \colon M \to \mathbb{R}^2$ and let
 $\sigma(P',Q')=(J',H'):=F' \colon M' \to \mathbb{R}^2$.

 \begin{lemma}{\rm
     (\cite[Proposition~1]{vungoc-monodromy})} \label{pro} Let
   $(f_{\hbar})^{-1}$ and $(g_{\hbar})^{-1}$ be two affine charts for
   ${\rm JointSpec}(P,Q)$, both defined on a ball $B$ around $c$. Let
   $(f_0)^{-1},(g_0)^{-1}$ denote the principal symbols of
   $(f_{\hbar})^{-1}$ and $(g_{\hbar})^{-1}$ respectively.  Then there
   is a constant matrix $C\in {\rm GL}(2,\Z)$ such that for all $c\in
   B$, we have that $${\rm d}(g_0)^{-1}(c)= C\cdot({\rm
     d}(f_0)^{-1}(c)).$$
 \end{lemma}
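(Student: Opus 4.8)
The statement to prove is Lemma~\ref{pro}: if $(f_\hbar)^{-1}$ and $(g_\hbar)^{-1}$ are two affine charts for ${\rm JointSpec}(P,Q)$ on a common ball $B$ around $c$, then $\DD(g_0)^{-1}(c) = C\cdot \DD(f_0)^{-1}(c)$ for a single matrix $C \in {\rm GL}(2,\Z)$ independent of $c\in B$. The underlying idea is that the joint spectrum near a regular value is, up to $\mathcal{O}(\hbar^2)$, the image of the lattice $2\pi\hbar\Z^2$ under both $f_\hbar$ and $g_\hbar$, so the map $g_\hbar^{-1}\circ f_\hbar$ must (semiclassically) preserve the integral lattice $2\pi\hbar\Z^2$, forcing its linearization to lie in $\agl(2,\Z)$; the linear part of that affine map is the matrix $C$.

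First I would set up the comparison map. By Definition~\ref{chart}, on a possibly smaller ball we have ${\rm JointSpec}(P,Q)\cap {\rm B}(c,\epsilon) = f_\hbar(2\pi\hbar\Z^2 \cap D_f)\cap{\rm B}(c,\epsilon) + \mathcal{O}(\hbar^2)$ and similarly with $g_\hbar$. Composing, the point sets $f_\hbar(2\pi\hbar\Z^2)$ and $g_\hbar(2\pi\hbar\Z^2)$ coincide up to $\mathcal{O}(\hbar^2)$ in the Hausdorff distance, locally near $c$. Write $\psi_\hbar := g_\hbar^{-1}\circ f_\hbar$, defined near a point of $2\pi\hbar\Z^2$ close to $g_0^{-1}(c)$; then $\psi_\hbar$ sends the lattice $2\pi\hbar\Z^2$ to itself up to an error $\mathcal{O}(\hbar^2)$, i.e. $\psi_\hbar(2\pi\hbar k) \in 2\pi\hbar\Z^2 + \mathcal{O}(\hbar^2)$ for lattice points $k$ in the relevant range. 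Rescaling by $(2\pi\hbar)^{-1}$ turns this into a near-lattice-preserving map of a chunk of $\Z^2$ with error $\mathcal{O}(\hbar)$.

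Next I would extract the matrix. Consider two lattice points $2\pi\hbar k$ and $2\pi\hbar k'$ with $k-k'$ a fixed small vector (say one of the standard basis vectors $e_1,e_2$). The difference quotient
\[
\frac{\psi_\hbar(2\pi\hbar k) - \psi_\hbar(2\pi\hbar k')}{2\pi\hbar}
\]
is, on the one hand, within $\mathcal{O}(\hbar)$ of an element of $\Z^2$ (being a difference of near-lattice points divided by the lattice spacing), and on the other hand converges as $\hbar\to 0$ to $\DD\psi_0(c)\cdot(k-k')$ where $\psi_0 = g_0^{-1}\circ f_0$ composed appropriately — more precisely, since $f_\hbar = f_0 + \hbar f_1$ and likewise for $g$, the principal-symbol part dominates and the limit of the difference quotient is $\DD(g_0^{-1}\circ f_0)$ applied to $e_i$. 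An integer-valued quantity that is within $\mathcal{O}(\hbar)$ of a limit must equal that limit exactly (for $\hbar$ small), so $\DD(g_0^{-1}\circ f_0)(c)\cdot e_i \in \Z^2$ for $i=1,2$; hence the Jacobian matrix $M(c):=\DD(g_0^{-1})(f_0(c))\cdot \DD f_0(c)$ has integer entries, and by symmetry (swapping the roles of $f$ and $g$) so does its inverse, so $M(c)\in{\rm GL}(2,\Z)$. Finally, $M(c)$ depends continuously on $c$ and takes values in the discrete set ${\rm GL}(2,\Z)$, so on the connected ball $B$ it is a constant matrix $C$. Unwinding, $\DD(g_0)^{-1}(c) = M(c)\cdot \DD(f_0)^{-1}(c) = C\cdot \DD(f_0)^{-1}(c)$, which is the claim. (Alternatively one invokes~\cite{vungoc-monodromy} directly, as the statement is quoted from there.)

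\textbf{Main obstacle.} The delicate point is the rigorous bookkeeping in the limit argument: one must make sure that the $\mathcal{O}(\hbar^2)$ error in the Bohr--Sommerfeld description of the spectrum, together with the $\hbar g_1$ correction term and the fact that $\psi_\hbar$ is only defined on lattice points (not a smooth map a priori), genuinely produces difference quotients that are simultaneously $\mathcal{O}(\hbar)$-close to integers \emph{and} $\mathcal{O}(\hbar)$-close to $\DD\psi_0\cdot e_i$ — i.e. that the two approximations are compatible with the same error order so that "integer within $\mathcal{O}(\hbar)$ of a fixed real" forces equality. This requires choosing the lattice points $k,k'$ at bounded distance (independent of $\hbar$) while still letting $2\pi\hbar|k-k'|\to 0$, and controlling that the spectral matching is uniform in a fixed neighborhood of $c$; once that uniformity is in hand the rest is the standard "continuous map into a discrete set is locally constant" argument.
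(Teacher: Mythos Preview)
Your proposal is correct and follows essentially the same argument as the paper: both proofs take two nearby lattice/spectral points, form a difference quotient that is simultaneously $\mathcal{O}(\hbar)$-close to an integer vector and to the Jacobian of $g_0^{-1}\circ f_0$ applied to a fixed vector, conclude that this Jacobian lies in ${\rm GL}(2,\Z)$, and then invoke discreteness to get constancy. The only cosmetic difference is that you package things via the composite $\psi_\hbar = g_\hbar^{-1}\circ f_\hbar$ acting on lattice points, whereas the paper works directly with pairs of spectral points $\lambda_\hbar,\lambda'_\hbar\in\Sigma_\hbar$ and applies $(f_\hbar)^{-1}$ and $(g_\hbar)^{-1}$ to them separately; the underlying mechanism is identical.
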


\begin{proof}
  We recall the proof for the reader's convenience. Let
  $\Sigma_\h:={\rm JointSpec}(P_{\hbar},Q_{\hbar})$.  Let $c\in U$, and
  $(f_{\hbar})^{-1}$, $(g_{\hbar})^{-1}$ be two affine charts of
  $\Sigma$ defined on a ball $B$ around $c$. Any open ball around $c$
  contains, for $\hbar$ small enough, at least one element of
  $\Sigma_{\hbar}$. Therefore, there exists a family
  $\lambda_{\hbar}\in \Sigma_{\hbar}\cap B$ such that $\lim_{\hbar\to
    0} \lambda_{\hbar} = c$.  Let $k\in\Z^2$ and let
  $\lambda'_{\hbar}$ be a family of elements of $\Sigma_{\hbar}\cap B$
  such that
  \[
  (f_{\hbar})^{-1}(\lambda_{\hbar}) =
  (f_{\hbar})^{-1}(\lambda'_{\hbar}) + \hbar k + \mathcal{O}(\hbar^2).
  \]
  Then, as $\hbar$ tends to zero,
  $\frac{\lambda'_{\hbar}-\lambda_{\hbar}}{\hbar}$ tends towards a
  limit $v\in\RM^2$ which satisfies $k = {\rm d}(f_0)^{-1}(c)v$. Since
  $\lambda_{\hbar}$ and $\lambda'_{\hbar}$ are in $\Sigma_{\hbar}$,
  there is a family $k'_{\hbar}\in\Z^2$ such that
  \begin{eqnarray} \label{ab}
  \left(\frac{(g_{\hbar})^{-1}(\lambda'_{\hbar})-(g_{\hbar})^{-1}(\lambda_{\hbar})}{\hbar}\right)
  = k'_{\hbar} + \mathcal{O}(\hbar).
  \end{eqnarray}
  The left-hand side of (\ref{ab}) has limit ${\rm
    d}(g_0)^{-1}(c)v$ as $\hbar\to 0$. Therefore $k'_{\hbar}$ is equal
  to a constant integer $k'$ for small $\hbar$, and we have $k' = {\rm
    d}(g_0)^{-1}(c)({\rm d}(f_0)^{-1}(c))^{-1}k$, which implies
  that $${\rm d}(g_0)^{-1}(c)({\rm d}(f_0)^{-1}(c))^{-1}\in {\rm
    GL}(2,\Z).$$ Since ${\rm GL}(2,\Z)$ is discrete, the conclusion of
  the lemma follows.
\end{proof}
 
Next we proceed in several steps.

\vskip 1em
\noindent\emph{Step 1}. First we normalize the systems $F:=(J,H)
\colon M \to \R^2$ and $F':=(J',H') \colon M' \to \mathbb{R}^2$ at the
focus\--focus singular points, which doesn't change the Taylor series
invariant. In order to do this, let $\varphi \colon \Omega \to
(\R^4,\omega_0)$ be a symplectomorphism into its image, where $\Omega$
is a neighborhood of the singular point $p$ and $\omega_0$ is the
standard symplectic form on $\R^4$, such that $ F\circ
\varphi^{-1}=g(q_1,q_2) $ near $(0,0,0,0)$ (it exists by Eliasson's
Theorem, see Section~\ref{sec:ff1}).  Here $g$ is some local
diffeomorphism.  Similarly let $\varphi' \colon \Omega' \to
(\R^4,\omega_0)$ be a local symplectomorphism, where $\Omega'$ is a
small neighborhood of the singular point $p'$ such that $ F'\circ
(\varphi')^{-1}=g'(q_1,q_2).  $ Here $g'$ is some local
diffeomorphism.

By replacing $F$ by the integrable system $g^{-1} \circ F$ and $F'$ by
$(g')^{-1} \circ F'$, defined respectively on semiglobal neighborhoods
$\mathcal{V}$, $\mathcal{V}'$ we may assume that $c_0=0$ and that $g$
and $g'$ are both the identity near the origin, that is, we may assume
that
 $$
 F\circ \varphi^{-1}=(q_1,q_2),\,\,\,\,\,\,\,\,\,\, F'\circ
 (\varphi')^{-1}=(q_1,q_2)
 $$
 near $(0,0,0,0)$.
 \\
 \\
 \emph{Step 2}. Denote by ${\rm B}_r$ the set of regular values of $F$
 and $F'$ simultaneously.  Since the focus-focus critical value
 $c_0=0$ is isolated, there exists a small ball $U$ around $0$ such
 that $\dot{U}:=U \setminus \{0\}={\rm B}_r\cap U$.

 Let $c\in\dot U$; let $\Lambda_c:=F^{-1}(c)$, which is a Liouville
 torus. Let $(\delta_1,\delta_2)$ be loops in $\Lambda_c$ that form a
 basis of cycles in $H_1(\Lambda_c,\R)$. Let
 $$
 \mathcal{A}_j(c)=\int_{\delta_j(c)} \nu,\,\,\,\,\,\, j=1,2
 $$
 be the action integrals, where $\nu$ is a $1$\--form such that $\DD
 \nu=\omega$.  Similarly, let $\Lambda'_c:=(F')^{-1}(c)$, let
 $(\delta'_1,\delta'_2)$ be a basis of cycles in $\Lambda'_c$, and let
 $$
 \mathcal{A}_j(c)=\int_{\delta'_j(c)} \nu',\,\,\,\,\,\, j=1,2.
 $$
 Write $ \mathcal{A}:=(\mathcal{A}_1,\mathcal{A}_2)$ and
 $\mathcal{A}':=(\mathcal{A}'_1,\mathcal{A}'_2)$. It follows from the
 action-angle theorem that $\mathcal{A}$ and $\mathcal{A}'$ are local
 diffeomorphisms of $\R^2$ defined near $c$.

 \begin{lemma} \label{52} There exists a matrix $B \in {\rm GL}(2,\Z)$
   such that we have the following relation between the integrals
   above~:
   \begin{eqnarray} \label{B:matrix} \mathcal{A}(c)=B \circ
     \mathcal{A}'(c) + {\rm constant}, \nonumber
   \end{eqnarray}
   for all $c \in \dot U$.
 \end{lemma}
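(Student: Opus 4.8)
The plan is to deduce this from Lemma \ref{pro} together with the Bohr\--Sommerfeld hypothesis and the assumption that the two joint spectra coincide up to $\mathcal O(\hbar^2)$. First I would observe that, by Definition \ref{chart}, the hypothesis that ${\rm JointSpec}(P,Q)$ satisfies the Bohr\--Sommerfeld rules on $U$ means precisely that near each regular value $c\in\dot U$ there is an affine chart $(g_\hbar)^{-1}=(\mathcal A_1,\mathcal A_2)+\hbar(\cdots)$ whose principal symbol is a basis of action variables; by the action\--angle theorem the map $\mathcal A=(\mathcal A_1,\mathcal A_2)$ of Step 2 is one such basis, so $\mathcal A$ is (the principal part of) an affine chart for ${\rm JointSpec}(P,Q)$ near $c$. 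Likewise $\mathcal A'$ is the principal part of an affine chart for ${\rm JointSpec}(P',Q')$ near $c$.

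Next, the key point is that the assumption ${\rm JointSpec}(P,Q)\cap U=({\rm JointSpec}(P',Q')\cap U)+\mathcal O(\hbar^2)$ lets one transport an affine chart for the second spectrum into an affine chart for the first: since the two spectra are $\mathcal O(\hbar^2)$\--close in Hausdorff distance, any $(g'_\hbar)^{-1}$ that is an affine chart for ${\rm JointSpec}(P',Q')$ near $c$ is also, verbatim, an affine chart for ${\rm JointSpec}(P,Q)$ near $c$ (the defining relation in Definition \ref{chart} only sees the spectrum up to $\mathcal O(\hbar^2)$, so replacing one $\mathcal O(\hbar^2)$\--equivalent set by the other changes nothing). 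In particular $\mathcal A'$ is the principal symbol of an affine chart for ${\rm JointSpec}(P,Q)$ near $c$, just as $\mathcal A$ is. Now apply Lemma \ref{pro} to the two affine charts with principal symbols $\mathcal A$ and $\mathcal A'$ on a common ball $B$ around $c$: there is a constant $B_c\in{\rm GL}(2,\Z)$ with ${\rm d}\mathcal A(c)=B_c\cdot{\rm d}\mathcal A'(c)$ for all $c$ in that ball.

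Finally I would upgrade this local statement to the uniform one on all of $\dot U$. The matrix $B_c$ is locally constant in $c$ by Lemma \ref{pro}, and $\dot U=U\setminus\{0\}$ is connected (here dimension four, hence $\dot U$ connected, is used), so $B_c\equiv B$ is a single matrix $B\in{\rm GL}(2,\Z)$ independent of $c\in\dot U$. Integrating the resulting identity ${\rm d}\mathcal A={\rm d}(B\circ\mathcal A')$ of closed $1$\--forms on the connected open set $\dot U$ gives $\mathcal A(c)=B\circ\mathcal A'(c)+{\rm constant}$ for all $c\in\dot U$, which is the claim. The one point that requires a little care — and which I expect to be the only real subtlety — is the bookkeeping that $\mathcal A$ and $\mathcal A'$ genuinely arise as principal symbols of affine charts \emph{of the same spectrum}, i.e. the observation that $\mathcal O(\hbar^2)$\--equivalence of the spectra makes Definition \ref{chart} insensitive to which of the two spectra one uses; once that is spelled out, Lemma \ref{pro} and connectedness of $\dot U$ do the rest.
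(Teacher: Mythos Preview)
Your proposal is correct and follows essentially the same route as the paper: realize $\mathcal A$ and $\mathcal A'$ as principal symbols of affine charts, use the $\mathcal O(\hbar^2)$\--equality of the two spectra to view both as affine charts for ${\rm JointSpec}(P,Q)$, invoke Lemma~\ref{pro} to get a locally constant $B\in{\rm GL}(2,\Z)$, and globalize via connectedness of $\dot U$. One small wording issue: $\dot U$ sits in $\R^2$, so its connectedness is just that of a punctured planar disk --- the ``dimension four'' remark is relevant only insofar as it forces the target of $F$ to be $2$\--dimensional.
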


\begin{proof}
  Fix $c\in \dot U$.  Let $(g_{\hbar})^{-1},(g'_{\hbar})^{-1} $ be
  affine charts near $c$ for the spectra ${\rm JointSpec}(P,Q)$ and
  ${\rm JointSpec}(P',Q')$ respectively. By Remark \ref{rk} we may
  assume that $(g_0)^{-1}=(\mathcal{A}_1,\mathcal{A}_2)$ and
  $(g'_0)^{-1}=(\mathcal{A}'_1,\mathcal{A}'_2)$.  Since the joint
  spectra are equal modulo $\mathcal{O}(\hbar^2)$, $(g'_{\hbar})^{-1}$
  is also an affine chart for ${\rm JointSpec}(P,Q)$.  Therefore by
  Lemma \ref{pro} there is a constant matrix $B \in {\rm GL}(2,\Z)$
  such that for all $c'$ near $c$,
  \begin{eqnarray} \label{ds:eq} {\rm d}(g_0)^{-1}(c')= B \cdot({\rm
      d}(g'_0)^{-1}(c')). \nonumber
  \end{eqnarray} 
  Since $B$ is constant in a neighborhood of $c$, it does not depend
  on $c\in \dot U$, which proves the lemma.
\end{proof}
By replacing $(\delta'_1,\delta'_2)$ by
$(B^{-1})(\delta'_1,\delta'_2)$ we may assume that the matrix $B$ in
(\ref{B:matrix}) is the identity matrix.

\vskip 1em
\noindent\emph{Step 3}. Consider the Hamiltonian vector field
$\ham{{}J}$. Notice that $J$ is a momentum map for an $S^1$\--action
on $U$ (recall that $J \circ \varphi^{-1}=q_1$ on $\Omega$).  Recall
the times $\tau_1,\tau_2$ in Definition \ref{taus}.  Let $\gamma_1(c)$
be the $2\pi$\--periodic orbit of $\ham{J}$ and let $\gamma_2(c)$ be
the loop constructed as the flow of the vector field $\tau_1
\ham{J}+\tau_2\ham{H}$.  The pair $(\gamma_1,\gamma_2)$ is a basis of
the homology group ${\rm H}_1(\Lambda_c,\R)$. Similarly define
$\ham{J'}$, $\ham{H'}$, $\tau'_1, \tau'_2$, and
$(\gamma'_1,\gamma'_2)$. We have that 
$$
(\gamma_1,\gamma_2)=C(\delta_1,\delta_2)$$ and 
$$(\gamma'_1,\gamma'_2)=C'(\delta'_1,\delta'_2)$$ for some matrices
$C,\,C' \in {\rm GL}(2,\Z)$.

\vskip 1em
\noindent\emph{Step 4}.  Let $\mathcal{I}_1,\mathcal{I}_2$ be the
actions corresponding to $\gamma_1,\gamma_2$. Then by Lemma~\ref{52}
we have that
\begin{eqnarray} \label{m} {\rm d}(\mathcal{I}_1,\mathcal{I}_2)=C{\rm
    d}(\mathcal{A}_1,\mathcal{A}_2)=C{\rm
    d}(\mathcal{A}'_1,\mathcal{A}'_2)= C \cdot (C')^{-1} {\rm
    d}(\mathcal{I}'_1,\mathcal{I}'_2). \nonumber
\end{eqnarray}
We want to show that $C=C'$. We write
$$
{\rm d}(\mathcal{I}_1,\mathcal{I}_2)= \left(
  \begin{array}{cc}
    a & b\\ \alpha & \beta
  \end{array}
\right) {\rm d} (\mathcal{I}'_1,\mathcal{I}'_2),
$$
where $a,b,\alpha,\beta\in\Z$ are constant.  The actions
$\mathcal{I}_1,\mathcal{I}'_1$ are well\--defined on $\dot{U}$ because
they come from the $S^1$\--action.  But $\mathcal{I}_2,\mathcal{I}'_2$
are not single\--valued on $\dot{U}$ because of monodromy (this
follows from~\eqref{equ:sigma}).  We have that:
\begin{eqnarray} \label{eq:i1} {\rm d} \mathcal{I}_1=a{\rm
    d}\mathcal{I}'_1+b{\rm d}\mathcal{I}'_2.
\end{eqnarray}
Hence $b=0$ (since otherwise $\mathcal{I}'_2$ would be
single\--valued), and
\begin{eqnarray} \label{eq:i2} {\rm d}\mathcal{I}_2=\alpha{\rm
    d}\mathcal{I}'_1+\beta{\rm d}\mathcal{I}'_2.
\end{eqnarray}
Since $\left(
  \begin{array}{cc}
    a & b\\ \alpha & \beta
  \end{array}
\right) \in {\rm GL}(2,\Z)$ we must have that
$$
\left(
  \begin{array}{cc}
    a & b\\ \alpha & \beta
  \end{array}
\right) =\left(
  \begin{array}{cc}
    \pm 1 & 0\\ \alpha & \pm1
  \end{array}
\right).
$$

\vskip 1em
\noindent\emph{Step 5}. We will use the following well\--known result,
see for instance \cite{duistermaat,vungoc-focus}.

\begin{lemma} \label{lem3} Let $\mathcal{I}=\int_{\gamma_{\alpha}}
  \nu$ where ${\rm d}\nu=\omega$, $F=(J,H) \colon M \to \R^2 $ be a
  proper integrable system on a connected symplectic $4$\--manifold
  $(M,\omega)$, and $\gamma_c \subset \Lambda_c=F^{-1}(c)$ is a smooth
  family of loops drawn on $\Lambda_c$, where $c$ varies in a small
  ball of regular values of $F$. Let $c\mapsto x(c)$ and $y\mapsto
  y(c)$ be smooth functions such that the oriented loop $\gamma_c$ is
  homologous to the $[0,1]$\--orbit of the flow of the vector field
  $x\ham{J}+y\ham{H}$. Then $ {\rm d}\mathcal{I}=x{\rm d}c_1+y{\rm
    d}c_2.  $
\end{lemma}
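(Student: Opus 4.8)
The plan is to obtain the formula by differentiating $\mathcal{I}(c)=\int_{\gamma_c}\nu$ by means of a single application of Stokes' theorem to a thin ``annulus'' swept out by a smooth one\--parameter family of loops. It is enough to compute the directional derivative of $\mathcal{I}$ in an arbitrary direction $\dot c\in\R^2$, so I would fix a regular value $c$ of $F$ and set $c(t):=c+t\dot c$ for $|t|$ small.

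First I would replace the given family $\gamma_c$ by a family of orbit loops realizing the same homology class. Each fiber $\Lambda_c=F^{-1}(c)$ is Lagrangian, so $\omega|_{\Lambda_c}=0$; hence $\nu|_{\Lambda_c}$ is closed and $\int_\gamma\nu$ depends only on the class of $\gamma$ in $H_1(\Lambda_c;\R)$. By hypothesis $\gamma_c$ is homologous on $\Lambda_c$ to the loop traced in time $1$ by the flow of $X_c:=x(c)\ham{J}+y(c)\ham{H}$ (note that $\ham{J}$ and $\ham{H}$ are tangent to the fibers of $F$, so this flow preserves $\Lambda_c$). Choosing base points $m_c\in\Lambda_c$ depending smoothly on $c$ (for instance along a local section of $F$) and letting $\psi(s,t)$ be the point obtained by flowing $m_{c(t)}$ for time $s$ along $X_{c(t)}$, for $(s,t)\in[0,1]\times[0,\epsilon]$, the curve $s\mapsto\psi(s,t)$ is a loop in $\Lambda_{c(t)}$ homologous to $\gamma_{c(t)}$, so $\mathcal{I}(c(t))=\int_{[0,1]}\psi(\cdot,t)^*\nu$; moreover $\psi(0,t)=\psi(1,t)=m_{c(t)}$ since the orbit closes up.

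Then I would apply Stokes' theorem to $\psi^*\nu$ on the oriented rectangle $[0,1]\times[0,\epsilon]$. For $\epsilon$ small the image of $\psi$ stays in the domain where $\nu$ is defined, and the two ``vertical'' boundary arcs $t\mapsto\psi(0,t)$ and $t\mapsto\psi(1,t)$ coincide, hence cancel, leaving $\int\psi^*\omega=\mathcal{I}(c)-\mathcal{I}(c(\epsilon))$, that is,
$$
\mathcal{I}(c(\epsilon))-\mathcal{I}(c)=-\int_0^\epsilon\!\!\int_0^1\omega\big(\partial_s\psi,\partial_t\psi\big)\,{\rm d}s\,{\rm d}t.
$$
Now $\partial_s\psi=X_{c(t)}(\psi(s,t))$ and $\omega(X_{c(t)},\cdot)=-x(c(t))\,{\rm d}J-y(c(t))\,{\rm d}H$, while $J\circ\psi\equiv c_1(t)$ and $H\circ\psi\equiv c_2(t)$ are constant in $s$ on the fiber $\Lambda_{c(t)}$; hence $\omega(\partial_s\psi,\partial_t\psi)=-x(c(t))\dot c_1-y(c(t))\dot c_2$, which is independent of $s$. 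Carrying out the $s$\--integration gives $\mathcal{I}(c(\epsilon))-\mathcal{I}(c)=\int_0^\epsilon\big(x(c(t))\dot c_1+y(c(t))\dot c_2\big)\,{\rm d}t$, and differentiating at $\epsilon=0$ yields ${\rm d}\mathcal{I}(c)\cdot\dot c=x(c)\dot c_1+y(c)\dot c_2$ for every $\dot c$, i.e. ${\rm d}\mathcal{I}=x\,{\rm d}c_1+y\,{\rm d}c_2$.

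There is no serious obstacle here, as this is a classical computation; the two points demanding attention are (i) the orientation bookkeeping in Stokes' theorem, so that the sign of $\int\psi^*\omega$ is correctly matched to $\mathcal{I}(c)-\mathcal{I}(c(\epsilon))$ and combines with the sign in $\omega(\partial_s\psi,\partial_t\psi)$ to give the stated formula, and (ii) the reduction in the second step to the orbit representative $\psi(\cdot,t)$, which relies both on $\Lambda_c$ being Lagrangian (so that the action integral is a homological invariant of the fiber) and on the hypothesis that $\gamma_c$ is homologous to the flow line of $x\ham{J}+y\ham{H}$.
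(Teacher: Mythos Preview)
Your argument is correct and is precisely the classical Stokes--theorem computation one uses to establish this formula. The paper itself does not give a proof of this lemma: it is simply quoted as a well\--known result with references to Duistermaat and to \cite{vungoc-focus}, and then applied. So there is no ``paper's own proof'' to compare against; your write\--up supplies a perfectly adequate one.

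Two minor remarks on presentation. First, the replacement of $\gamma_c$ by the orbit loop relies on the fact that the time\--$1$ orbit of $x\ham{J}+y\ham{H}$ actually closes up; this is implicit in the hypothesis (otherwise it would not make sense to speak of a loop homologous to it), but you might say so explicitly. Second, the cancellation of the two ``vertical'' boundary arcs uses that both endpoints of the orbit coincide with the chosen base point $m_{c(t)}$ for every $t$, which in turn uses that the flows of $\ham{J}$ and $\ham{H}$ commute (so the time\--$1$ map of $X_{c(t)}$ is well defined and brings $m_{c(t)}$ back to itself). You have noted the first point; the second is harmless but worth a word.
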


{F}rom Lemma \ref{lem3} we obtain
$$
{\rm d}\mathcal{I}_2=\tau_1{\rm d}c_1+\tau_2{\rm d}c_2,\,\,\,\,\, {\rm
  d}\mathcal{I}'_2=\tau'_1{\rm d}c_1+\tau'_2{\rm d}c_2,
$$
and $ {\rm d}\mathcal{I}_1={\rm d}\mathcal{I}'_1=2\pi\, {\rm d}c_1.  $
{F}rom equation (\ref{eq:i1}) we get
$$
{\rm d}c_1=a{\rm d}c_1
$$
and hence $a=1$.  From equation (\ref{eq:i2}) we get that
$$
\tau_1{\rm d}c_1 +\tau_2{\rm d}c_2=2\pi \alpha {\rm
  d}c_1+\beta(\tau'_1{\rm d}c_1+\tau'_2{\rm d}c_2),
$$
and therefore
\[
\begin{cases}
  \tau_1= 2\pi \alpha+\beta \tau'_1,\qquad 0 \leq \tau_1<2\pi; \quad 0\leq \tau'_1<2\pi;\\
  \tau_2=\beta \tau'_2
\end{cases}
\]
Hence $$2\pi \abs{\alpha} =| \tau_1-\tau'_1| <2\pi,$$ which implies
$\alpha=0$.  On the other hand, it follows from (\ref{equ:sigma}) that
$$
\sigma_2-{\rm Re}({\rm log}c)=\beta (\sigma_2'-{\rm Re}({\rm log}c)).
$$
Hence $$(1-\beta){\rm Re}({\rm log}c)=\sigma_2-\beta \sigma'_2,$$ which
implies, since $\sigma_2$ and $\sigma'_2$ extend smoothly to $U$, that
$\beta=1$. Therefore we have proven:
$$
\left(
  \begin{array}{cc}
    a & b\\ \alpha & \beta
  \end{array}
\right) =\left(
  \begin{array}{cc}
    1 & 0\\ 0 & 1
  \end{array}
\right).
$$
It follows that there exist some constants $k_1,k_2$ such that
$\mathcal{I}_1=\mathcal{I}'_1+k_1$ and $\mathcal{I}_2=\mathcal{I}'_2 +
k_2$ on $\dot U$.

\vskip 1em
\noindent\emph{Step 6}. Now from~\eqref{equ:sigma} (see
also~\cite{vungoc-semi-global}) we deduce that there are constants
$K,K'$ such that
\begin{eqnarray} \label{star}
  \begin{cases}
    \mathcal{I}_2(c)=K-{\rm Re}(c{\rm log}c-c)+S(c)\\
    \mathcal{I}'_2(c)=K'-{\rm Re}(c {\rm log}c-c)+S'(c) \\
  \end{cases}
\end{eqnarray}
Hence $${\rm d}S(c)={\rm d}S'(c),$$ and therefore the symplectic
invariants $(S)^{\infty}$ and $(S')^{\infty}$ are equal. The result
now follows from Theorem~\ref{theo:vn}.

\vskip 3em

{\emph{Acknowledgements}.  This paper was written at the Institute for
  Advanced Study.   AP was partially
supported by NSF Grants DMS-0965738 and DMS-0635607, an NSF CAREER
Award, a Leibniz Fellowship, Spanish Ministry of Science Grants MTM
2010-21186-C02-01 and Sev-2011-0087.   VNS is partially supported by the Institut
  Universitaire de France, the Lebesgue Center (ANR Labex LEBESGUE),
  and the ANR NOSEVOL grant.  He gratefully acknowledges the
  hospitality of the IAS. }

\bibliographystyle{abbrv}%
\bibliography{bibliography}

\newpage

\noindent
\medskip\noindent

\noindent
\noindent
\\
{\bf {\'A}lvaro Pelayo} \\
School of Mathematics\\
Institute for Advanced Study\\
Einstein Drive\\
Princeton, NJ 08540 USA.
\\
\\
\noindent
Washington University in St Louis\\ 
Mathematics Department \\
One Brookings Drive, Campus Box 1146\\
St Louis, MO 63130-4899, USA.\\
{\em E\--mail}: \texttt{apelayo@math.wustl.edu},  \texttt{apelayo@math.ias.edu}  \\
{\em Website}: \url{http://www.math.wustl.edu/~apelayo/}

\medskip\noindent

\noindent
\noindent
{\bf San V\~u Ng\d oc} \\
Institut Universitaire de France
\\
\\
Institut de Recherches Math\'ematiques de Rennes\\
Universit\'e de Rennes 1\\
Campus de Beaulieu\\
F-35042 Rennes cedex, France\\
{\em E-mail:} \texttt{san.vu-ngoc@univ-rennes1.fr}\\
{\em Website}: \url{http://blogperso.univ-rennes1.fr/san.vu-ngoc/}

\end{document}